\newcommand{\R}{{\rm I\!R}}
\def\scr#1{{\cal #1}}
\newtheorem{theorem}{Theorem}
\newtheorem{lemma}{Lemma}
\newtheorem{corollary}{Corollary}
\newtheorem{remark}{Remark}
\newtheorem{definition}{Definition}
\begin{document}

\author{Yuke Li,\thanks{This work was supported by National Science Foundation grant n.1607101.00 and US Air Force grant n. FA9550-16-1-0290. The authors thank an anonymous reviewer, whose comment prompted us to generalize the preliminary ideas in the first submission. Those ideas were also presented in the 2018 MTNS meeting.} and A. Stephen Morse\thanks{Y. Li and A. Stephen Morse are with the Department of Electrical Engineering, Yale University, New Haven, CT, USA, \{yuke.li, as.morse\}@yale.edu}}

\title{The Power Allocation Game on Dynamic Networks: Subgame Perfection}
\maketitle

\begin{abstract}
In the game theory literature, there appears to be little research on equilibrium selection for normal-form games with an infinite strategy space and discontinuous utility functions. Moreover, many existing selection methods are not applicable to games involving both cooperative and noncooperative scenarios (e.g., ``games on signed graphs''). With the purpose of equilibrium selection, the power allocation game developed in \cite{allocation}, which is a static, resource allocation game on signed graphs, will be reformulated into an extensive form. Results about the subgame perfect Nash equilibria in the extensive-form game will be given. This appears to be the first time that subgame perfection based on time-varying graphs is used for equilibrium selection in network games. This idea of subgame perfection proposed in the paper may be extrapolated to other network games, which will be illustrated with a simple example of congestion games.

\keywords subgame perfect Nash equilibrium, equilibrium selection, time-varying graph, extensive form, network games, games on signed graphs

\end{abstract}

\section{Introduction}

In \cite{allocation}, a power allocation game (``PAG'') is developed as a static, distributed resource allocation game on a network of countries (equivalently, agents or decision makers of those countries) connected to each other as friends or adversaries. Pure strategy Nash equilibrium classes are defined for the purpose of making game predictions. For instance, a country may survive in one equilibrium class but not in another \cite{survival}. 

An important question that has apparently not been addressed is equilibrium selection, e.g., whether there are justifiable grounds for the agents to play a certain equilibrium or equilibrium class exclusively. For instance, can they always play the kinds of equilibria in which a certain country survives? Technically speaking, this is an equilibrium selection problem for an $N$-player normal form game with an infinite strategy space. The literature of equilibrium selection methods for finite games (e.g., \cite{kohlberg1986strategic,hillas1990definition,van1991stability,myerson1978refinements,selten1975reexamination}) is thus irrelevant to the purpose of this paper.  


One noteworthy paper on equilibrium selection for infinite normal form games is \cite{simon1995equilibrium}. This work aims to generalize approaches such as Selten's trembling hand perfection criterion originally developed for selecting equilibria in finite games to the case of infinite normal form games with continuous utility functions. However, the methods in \cite{simon1995equilibrium} cannot be applied to the PAG because by the two preference axioms in \cite{survival}, the real-valued utility function representation of countries' preference for the power allocation matrices must be discontinuous \cite{paradox2}. The more recent work in \cite{carbonell2013approximation,carbonell2011existence,carbonell2011perfect} contains the development of equilibrium selection methods for infinite normal-form games with discontinuous utility functions. But the assumptions for the utility functions in those discontinuous games to hold include, for instance, ``payoff-security'', and are therefore inapplicable, either.

Some other existing criteria for equilibrium selection can be shown to be unsuitable for equilibrium selection in the PAG for more substantive reasons. Based on these criteria, the equilibria may be required to be also ``Pareto optimal'', ``coalition-proof'', and so forth, thereby refining the Nash equilibria set. However, the definition of Pareto optimality states that no agent can be better off without making at least another agent worse off. This concept does not seem appropriate for conflictual scenarios where players adopt a certain ``winner-takes-all'' logic. In addition, by the concept of coalition-proof Nash equilibria, any subset of players in the game cannot all strictly benefit with any form of joint deviations. But in the context of the PAG, the idea of joint deviations by a ``coalition'' consisting of some countries and their adversaries may be farfetched. 

In this paper we apply the notion of time-varying graphs to the PAG, i.e., to have the PAG take place in a sequence of changing networked environments, and explore the subgame perfect Nash equilibria in the extensive-form PAG. The first investigations of extensive form games include \cite{zermelo1913anwendung},\cite{selten1978chain} and 
\cite{selten1965spieltheoretische}; one classic application is the Stackelberg game for the study of market competition, where a leader acts first before the followers choose to whether to enter the market to compete with it. In the context of the PAG, in an ``ascending chain'' sequence, the environment may gradually ``expand'' to incorporate more adversary and friend relations among countries over time. In a ``descending chain'' sequence, the environment may undergo ``faults'' over time in the sense that some relations may disappear. In any of the sequences, countries optimize by sequentially allocating their power to a queue of friends and adversaries. A single decision maker's optimization will be done much like in other resource allocation problems such as packing. However, a major difference from these other problems is that in a resource allocation game, it will be of more interest to explore multiple decision makers' allocation strategies which are in best responses with one another. The idea of subgame perfection based on time-varying graphs has seldom been used in equilibrium selection for network games. It seems that this idea may also be applicable to other network games, such as congestion games (to which subgame perfection based on agents' sequential moves like in the Stackelberg game has been applied, e.g.,  \cite{milchtaich1998crowding}).

This paper is organized as follows.  First the power allocation game formulated in \cite{allocation} will be briefly
summarized in Section II. Then in Section III the extensive- form PAG will be described and finally in Section IV results pertaining to the new game form will be stated. In Section V a simple example of how the method can be applicable to congestion games will be illustrated. Lastly, modifications to the presented idea appropriate to different contexts will be discussed.

\section{Review: The PAG in Normal Form} 

\subsection{Basic Idea} By the  {\em power allocation game} or PAG is meant a distributed resource allocation game between $n$ countries with labels in $\mathbf{n}= \{1,2,\ldots,n\}$. The game is formulated on a simple, undirected,  signed graph $\mathbb{G}$ called ``an environment graph'' \cite{survival} whose $n$ vertices correspond to the countries and whose $m$ edges represent relationships between countries. An edge between distinct vertices $i$ and $j$, denoted by $(i,j)$, is labeled with a plus sign if countries $i$ an $j$ are friends and with a minus sign if countries $i$ and $j$ are adversaries.  Let the set of all friendly pairs be $\mathcal{R}_{\mathcal{F}}$ and the set of all adversarial pairs be $\mathcal{R}_{\mathcal{A}}$.  
For each $i\in\mathbf{n}$, $\scr{F}_i$ and $\scr{A}_i$ denote the sets of labels of country $i$'s friends and adversaries respectively; it is assumed that $i\in\scr{F}_i$ and that $\scr{F}_i$ and $\scr{A}_i$ are disjoint sets.
Each country $i$ possesses a nonnegative quantity $p_i$ called the {\em total power} of country $i$. An allocation of this power or {\em strategy} is a nonnegative $n\times 1$ row vector $u_i$ whose $j$ component $u_{ij}$ is that part of $p_i$ which country $i$ allocates under the strategy to either support country $j$ if $j\in\scr{F}_i$ or to demise country $j$ if $j\in\scr{A}_i$; accordingly $u_ij= 0$ if $j\not\in\scr{F}_i\cup\scr{A}_i$ and $u_{i1}+u_{i2}+\cdots +u_{in} = p_i$. The goal of the game is for each country to choose a strategy which contributes to the demise of all of its adversaries and to the support of all of its friends.

Each set of country strategies $\{u_i,\;i\in\mathbf{n}\}$ determines an
 $n\times n$ matrix  $U$ whose $i$th row is $u_i$. Thus $U = [u_{ij}]_{n\times n}$ is a nonnegative matrix such that, for each $i\in\mathbf{n}$, $u_{i1}+u_{i2}+\cdots +u_{in} = p_i$. Any such matrix is called a {\em strategy matrix}  and $\scr{U}$ is the set of all $n\times n$ strategy matrices.

\subsection{Multi-front Pursuit of Survival}

In \cite{allocation} and \cite{survival}, how countries allocate the power in the support of \emph{the survival} of its friends and the demise of that of its adversaries is studied, which is in line with the fundamental assumptions about countries' behavior in classical international relations theory.\cite{waltz1979theory} The following additional formulations are offered: 

Each strategy matrix $U$ determines for each $i\in\mathbf{n}$, the {\em total support} $\sigma_i(U)$ of
 country $i$ and the {\em total threat}  $\tau_i(U)$ against country $i$.  Here
 $\sigma_i:\scr{U}\rightarrow\R$ and $\tau_i:\scr{U}\rightarrow \R$ are non-negative
 valued maps defined by
$U\longmapsto \sum_{j\in\scr{F}_i}u_{ji} +\sum_{j\in\scr{A}_i}u_{ij}$
 and $U\longmapsto \sum_{j\in\scr{A}_i}u_{ji}$
  respectively. Thus country $i$'s total support is the sum of the amounts of power
   each of country $i$'s friends
  allocate to its support plus the sum of the amounts of power country $i$ allocates  to the
  destruction of  all of its adversaries. Country $i$'s total threat, on the other hand, is the sum of the amounts of power
  country $i$'s adversaries allocate to its destruction. These allocations in turn determine
   country $i$'s {\em state} $x_i(U)$ which may be safe, precarious, or unsafe depending on the relative
    values of $\sigma _i(U)$ and $\tau_i(U)$. In particular, 
     $x_i(U) = $ safe
    if $\sigma_i(U)>\tau_i(U)$, $x_i(U)=$ precarious if $\sigma_i(U)=\tau_i(U)$, or
    $x_i(U) = $ unsafe if $\sigma_i(U)<\tau_i(U)$.

In playing the PAG, countries select individual strategies in accordance with certain weak and/or strong preferences.
A sufficient set of conditions for  country $i$ to {\em weakly prefer} strategy
matrix $V\in\scr{U}$ over strategy matrix $U\in\scr{U}$ are as follows
\begin{enumerate}
\item  For all $j\in\scr{F}_i$
 either $x_j(V)\in$ \{safe, precarious\}, or $x_j(U)\in$ \{unsafe\}, or both.
 \item For all $j\in\scr{A}_i$ either $x_j(V)\in $ \{unsafe, precarious\}, or $x_j(U)\in$ \{safe\}, or both.
\end{enumerate}
Weak preference  by country $i$ of  $V$ over $U$ is denoted by $U \preceq V$.

Meanwhile, a sufficient condition for country
 $i$ to be {\em indifferent} to the choice between $V$ and $U$ is  that $x_i(U)=x_j(V)$ for all $j\in\scr{F}_i\cup\scr{A}_i$.  This is denoted by $V\sim U$.

Finally, a sufficient condition for country $i$  to {\em strongly prefer} $V$ over $U$ is that $x_i(V)$ be a  safe or precarious  state and $x_i(U)$ be an unsafe state. Strong preference by country $i$ of $V$ over $U$ is denoted by $U \prec V$.

\section{The PAG in Extensive Form}

\subsection{Sequence of Spanning Subgraphs}

Let $\mathbb{G} = (\mathcal{V}, \mathcal{E})$ be called an ``environment graph'' as in \cite{survival}. Write $\mathcal{G}$ for the set of all spanning subgraphs of $\mathbb{G}$. Let a sequence of spanning subgraphs $\mathbb{G}(t),\;t \in \{0, 1, 2, \ldots, n\}$ from $\mathcal{G}$ be such that $\mathbb{G}(t) \in \mathcal{G}$, $t \in \{0, 1, 2, \ldots, n\}$. Let $\scr{F}_{i}(t)$ and $\scr{A}_{i}(t)$ respectively be the sets of labels of country $i$'s friends and adversaries at time $t$. 

A sequence of spanning subgraphs $\mathbb{G}(t),\;t \in \{0, 1, 2, \ldots, n\}$ from $\mathcal{G}$ is an {\em ascending chain} if $\mathbb{G}(t)\subset \mathbb{G}(t+1),\;t \in \{0, 1, 2, \ldots, n\}$ where by $\mathbb{G}(t)\subset \mathbb{G}(t+1)$ we mean that the edge set of $\mathbb{G}(t)$ is contained in the edge set of $\mathbb{G}(t+1)$. Conversely, a sequence of spanning subgraphs $\mathbb{G}(t),\; t \in \{0, 1, 2, \ldots, n\}$ from $\mathbb{G}$ is a {\em descending chain} if $\mathbb{G}(t+1)\subset \mathbb{G}(t),\;t \in \{0, 1, 2, \ldots, n\}$.

\begin{remark} Let the ascending sequence $\mathbb{G}(t),\;t \in \{0, 1, 2, \ldots, n\}$  {\em reach} $\mathbb{G}$ from $\mathbb{G}(0)$ in $n$ steps, i.e., $\mathbb{G}(n) = \mathbb{G}$. For such as an ascending chain of spanning subgraphs that reaches the environment subgraph $\mathbb{G}$ at time $t$, the number of subgraphs at time $t$ has to satisfy,
$$\sum_{\beta = 0}^{m - \alpha} \frac{(m-\alpha)!}{\beta!}$$ where $m$ is the total number of edges in $\mathbb{G}$, $\alpha$ is the number of edges in $\mathbb{G}(t-1)$ and $\beta$ is the number of edges in $\mathbb{G}(t)$.
\end{remark}


\subsection{Sequence of Decisions}

 At time $t \in \{0, 1, 2, \ldots, n\}$, every country $i$ decides on its strategy $u_{i}(t)$, i.e., the amount of resources being allocated to its friends and adversaries labeled respectively in $\mathcal{F}_{i}(t)$ and $\mathcal{A}_{i}(t)$, subject to its total power constraint $i\in\mathbf{n}$, $u_{i1}(t)+u_{i2}(t)+\cdots +u_{in}(t) = p_i$.  Various decision rules may be assumed, which will be discussed in Section IV. For time $t \in \{1, 2, \ldots, n\}$, let the set of power allocation matrices at layer $t$ be represented as $\mathcal{U}(t) \subset \mathbb{R}^{n \times n}$ where $\forall U(t) \in \mathcal{U}(t)$ and $i \in \mathbf{n}$.



The information structure of the dynamic game is complete information. When making each possible allocation at time $t$, each country has observed \emph{the power allocation path} prior to time $t$, which is $$U(0), U(1), \ldots, U(t-1).$$ At the end of the sequence, each country $i$ receives its state $x_{i}(U(t),\; t \in \{0, 1, 2, \ldots, n\})$ as the outcome of \emph{the power allocation path} from $t = 0$ to $t=n$, $$U(0), U(1), \ldots, U(n).$$  In other words, as consistent with a standard extensive-form game, the power allocation outcome is only realized at $t = n$.

\subsection{Sequence of Subgames}


The power allocation game in extensive form can be represented by a decision tree $\mathbb{T}$ with $n$ layers and a nonempty set of decision nodes at layer $t$ where $t \in \{0, 1, 2, \ldots, n\}$. As opposed to other extensive form games (e.g., two-player games), at each decision point, all countries will decide on their allocations on the corresponding subgraph, ending up playing a ``smaller'' version of the original normal form PAG. 

The root node denotes the decision point of the countries involved in environment graph at $t = 0$, $\mathbb{G}(0)$. Each decision node at each layer denotes the point the countries have to decide on the allocations on the friend and adversary relations involved in the environment graph $\mathbb{G}(t)$. From each node at layer $t$, there grows an infinite number of branches, the $q$th of which represents a possible allocation strategy $U_{q}(t)$ made by countries to those friends and adversaries. The number of the branches between any node at layer $t$ and its successors at layer $t+1$ is the cardinality of $\mathcal{U}(t)$.

In addition, each decision node in the tree $T$ represents an information set. As is commonly defined, an information set is a set of decision nodes that establishes all the possible allocations that could have taken place in the game so far, given what the players that will act next have observed. Assuming complete and perfect information (i.e., the power allocation path leading to the particular decision node has already been observed by countries), each information set in the tree is a singleton. 

\begin{figure}
\centering
\includegraphics[width=0.5\textwidth]{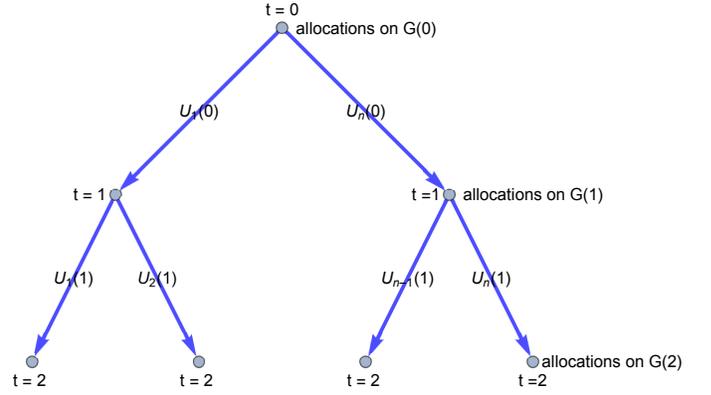}
\caption{An example of a decision tree T up to t = 2}
\end{figure}

Such a decision tree defines a sequence of subgames. In this extensive form game framework, the $q$-th ($q \in \mathbb{N}$) decision node at layer $t$ of $T$ ($t \in \{0, 1, 2, \ldots, n\}$) and all its successors make up a subgame at layer $t$; let the set of subgames at layer $t$ be $\mathcal{\kappa}(t)$. Obviously, the total number of decision nodes in $T$ equals the total number of subgames. Each path in the tree $T$ represents a power allocation path from $t = 0$ to $t = n$, $U(0), U(1), \ldots, U(n)$. A function $\eta:$ $$\mathcal{U}(0) \times \mathcal{U}(1) \ldots \times \mathcal{U}(n) \longrightarrow \mathcal{\kappa}(0) \times \mathcal{\kappa}(1) \ldots \times \mathcal{\kappa}(n)$$ maps a power allocation path to a sequence of $n+1$ subgames it has traversed, where the $t$-th subgame of this sequence can be represented as $\eta(U(0), U(1), U(n))_{t}$, $t \in \{0, 1, 2, \ldots, n\}$. As consistent with the assumptions in the normal form, there is only a finite number of possible power allocation outcomes realized at the terminal nodes of the tree, i.e., a number of $3^n$ possible state vectors which countries will partially order based on the axioms in the setup of the PAG.

\subsection{Subgame Perfect Nash Equilibrium}

In an extensive-form game, it is natural to investigate the subgame perfect Nash equilibrium. In extensive form games with finite strategy space, a subgame perfect Nash equilibrium should in principle include the complete plan of every agent's action in every instance they would encounter; however, it is impossible to do the same for the PAG, which motivates the below definition based on equilibrium paths of power allocation strategy matrices.

\begin{definition}[Subgame Perfection Nash Equilibrium]
A power allocation path $$U^*(0), \ldots U^*(t) \ldots U^*(n)$$ on a sequence of the spanning subgraphs of the environment graph $\mathbb{G}$ is a subgame perfect Nash equilibrium for the PAG $\Gamma$ in extensive form if and only if it is an equilibrium in all of the $n+1$ subgames it traverses. 

\end{definition}

\subsection{Extension with Incomplete Information}

Though not the focus of this paper, a straightforward extension can be made to incorporate incomplete information into the extensive-form PAG. Suppose there exists an additional agent called the nature. The nature draws a probability distribution over the space of the sequences. Let a probability distribution over the measurable space $\Omega$ of all possible sequences of the spanning subgraphs of $\mathbb{G}$ be $\delta$, and where the $h$-th sequence takes the probability mass $\delta_{h}$. On each possible sequence, countries play the corresponding extensive-form game as formulated previously. 
 
\begin{remark} Denote a subgame perfect equilibrium of a PAG assuming a possible sequence $h$ as $(U(0), U(1), \ldots, U(n))_{h}$. Assuming all countries have a ``common prior'' about the probability distribution $\delta$ over $\Omega$, $(U(0), U(1), \ldots, U(n))_{h}$ is always a Bayesian Nash equilibrium (see relevant definitions in \cite{harsanyi1967games}). 

\end{remark}

\section{Main Results}

When analyzing subgame perfect Nash equilibrium of the extensive-form PAG, it is necessary to assume a specific decision rule for the game. Two specific examples are below, where the second can be regarded as a special case of the first.


\subsubsection*{Decision Rule 1}
At time $t \in \{1, 2, \ldots, n\}$, every country $i$ collects its allocations to its friends and adversaries labeled in $\mathcal{F}_{i}(t-1)$ and $\mathcal{A}_{i}(t-1)$ back into its reserved power $u_{ii}(t)$, and decides how to reallocate $u_{ii}(t)$, which is equal to $p_{i}$, to its friends and adversaries labeled in $\mathcal{F}_{i}(t)$ and $\mathcal{A}_{i}(t)$.

\subsubsection*{Decision Rule 1.1} At time $t \in \{1, 2, \ldots, n\}$, every country $i$ keeps constant its allocations to its friends and adversaries labeled in $\mathcal{F}_{i}(t-1)$ and $\mathcal{A}_{i}(t-1)$ that have not disappeared at $t$, collects allocations to those that have disappeared at $t$ back into $u_{ii}(t)$, and allocates its reserved power $u_{ii}(t)$ to its new friends and adversaries labeled in $\mathcal{F}_{i}(t) -  \mathcal{F}_{i}(t-1)$ and $\mathcal{A}_{i}(t) - \mathcal{A}_{i}(t-1)$.

\begin{lemma} In an extensive-form PAG on a sequence of spanning subgraphs of $\mathbb{G}$, the subgame perfect Nash equilibrium set by assuming Decision rule 1 is the superset of subgame perfect Nash equilibrium set by assuming Decision rule 2. 
\end{lemma}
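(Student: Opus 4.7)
The plan is to establish the set inclusion by showing that every SPNE path under the more restrictive rule (``Decision Rule 1.1'', which the lemma refers to as Decision Rule~2) is also an SPNE under Decision Rule~1. The structural fact driving the argument is that, at each decision layer $t$ and along every subgame history, the feasible action set under Rule~1.1, call it $\mathcal{U}_{1.1}(t)$, is contained in the feasible set $\mathcal{U}_1(t)$ under Rule~1: Rule~1.1 pins every allocation to a persisting friend or adversary to its value at $t-1$, whereas Rule~1 collects all power back into $u_{ii}(t)$ and permits an arbitrary redistribution of $p_i$ over $\scr{F}_i(t) \cup \scr{A}_i(t) \cup \{i\}$. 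In particular, any admissible play under Rule~1.1 is also an admissible play under Rule~1, so the candidate SPNE path lives in the strategy space of both games.

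First I would recast the two extensive-form games on the same decision tree $\mathbb{T}$: they share the same layering, information structure, subgame decomposition via $\eta$, and terminal-state partition, and they differ only in the action set attached to each node. Fixing an SPNE path $U^* = (U^*(0), \ldots, U^*(n))$ under Rule~1.1, it then suffices by a one-shot deviation argument to show that at every subgame no country $i$ has a strictly preferred unilateral deviation within $\mathcal{U}_1(t)$. I would carry this out by backward induction on $t$, using at each layer the fact that the preferences $\preceq$, $\sim$, $\prec$ from Section~II depend only on the terminal state vector $x(U(n))$, itself a function of the aggregate supports $\sigma_j(U(n))$ and threats $\tau_j(U(n))$.

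The main obstacle is handling those Rule~1 deviations $U_i'(t) \in \mathcal{U}_1(t) \setminus \mathcal{U}_{1.1}(t)$ that rearrange allocations to persisting relations and so cannot be mirrored inside $\mathcal{U}_{1.1}(t)$ at the same stage. My plan is to match each such deviation, in its effect on the terminal aggregates $\sigma_j(U(n))$ and $\tau_j(U(n))$ that enter country $i$'s preference, by a Rule~1.1-admissible deviation that redistributes only the recoverable portion of power across the remaining layers. The monotonicity of the spanning-subgraph sequence (ascending or descending chain) underwrites this matching, since over the remaining layers every relation that eventually enters $\mathbb{G}$ becomes addressable via the new-relation allocation permitted by Rule~1.1. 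Once such a matching has been constructed, the assumed SPNE property of $U^*$ under Rule~1.1 precludes the matched deviation from being strictly preferred, and therefore precludes the original Rule~1 deviation, closing the induction and establishing $\text{SPNE}(\text{Rule 1}) \supseteq \text{SPNE}(\text{Rule 1.1})$.
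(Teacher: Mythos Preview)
Your opening observation---that every Rule~1.1 action profile is also a Rule~1 action profile, because under Rule~1 one may simply choose to reallocate the same amounts to persisting relations and the remainder to new ones---is exactly the content of the paper's proof. The paper stops there and declares the inclusion of SPNE sets immediate; it does not engage with the issue you correctly flag, namely unilateral deviations lying in $\mathcal{U}_1(t)\setminus\mathcal{U}_{1.1}(t)$.

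Your matching argument intended to handle those extra deviations has a genuine gap. It fails already at the terminal layer $t=n$: a Rule-1 deviation there that moves mass between two \emph{persisting} relations---say from $u_{ij}$ to $u_{ik}$ with $j,k\in\scr{F}_i(n-1)\cup\scr{A}_i(n-1)$---changes the terminal aggregates $\sigma_\cdot(U(n))$ and $\tau_\cdot(U(n))$, yet under Rule~1.1 both entries are frozen at their layer-$(n{-}1)$ values and there are no ``remaining layers'' over which to redistribute. More generally, in the subgame at any layer $t$, no Rule-1.1 deviation at $t$ or later can alter an allocation $u_{ij}$ whose edge $(i,j)$ persists throughout $\{t-1,\ldots,n\}$, so the mechanism you propose cannot replicate an arbitrary Rule-1 reshuffle among persisting relations. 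Closing this gap would require an additional argument---specific to the PAG preference axioms---showing that such reshuffles are never strictly $\prec$-improving for the deviator; the monotonicity of the spanning-subgraph chain by itself does not supply that. In short, you have gone beyond the paper in recognizing what needs to be checked, but the proposed check does not go through as written.
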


\begin{proof}
Trivial by the definition of Decision Rule 1.1. Under Decision Rule 1, when reallocating, let every country allocate the same amount of power as it does at $t-1$ to its friends and adversaries that still remain at $t$, and the rest of the power to new friends and adversaries labeled in $\mathcal{F}_{i}(t) -  \mathcal{F}_{i}(t-1)$ and $\mathcal{A}_{i}(t) - \mathcal{A}_{i}(t-1)$. 

Therefore, any subgame perfect Nash equilibrium $U^*(0), U^*(1), \ldots, U^*(n)$ by assuming Decision rule 1.1 must also be an equilibrium path by assuming Decision rule 1.

\end{proof}

\begin{theorem}
Given a pure strategy Nash equilibrium $U^*$ of the normal-form PAG on $\mathbb{G}$, if in $U^*$ there exists two countries $i$ and $j$ who have made zero allocations to each other $u^*_{ij} = u^*_{ji} = 0$, $U^*$ can be realized on the equilibrium path of a subgame perfect Nash equilibrium in an extensive-form PAG on a sequence of spanning graphs of 
$\mathbb{G}$. This holds independently of the decision rule assumed. 
\end{theorem}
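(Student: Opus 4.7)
The plan is to construct an explicit two-layer ascending chain and exhibit $U^{*}$ as the equilibrium path. Set $\mathbb{G}(0) = \mathbb{G} \setminus \{(i,j)\}$ and $\mathbb{G}(1) = \mathbb{G}$, which is an ascending chain via the addition of the single edge $(i,j)$; if no such edge is present in $\mathbb{G}$, then $\mathbb{G}(0) = \mathbb{G}$ and the statement reduces to a one-layer triviality. The candidate equilibrium path is $U^{*}(0) = U^{*}(1) = U^{*}$, which is feasible on $\mathbb{G}(0)$ precisely because $u^{*}_{ij} = u^{*}_{ji} = 0$. The central observation is that, for any strategy profile $U$ with $u_{ij} = u_{ji} = 0$, the support and threat functions $\sigma_m(U)$ and $\tau_m(U)$ evaluate identically on $\mathbb{G}$ and on $\mathbb{G}(0)$, since the only terms in which they could differ are $u_{ij}$ or $u_{ji}$.

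From this state-invariance I would deduce that $U^{*}$ is a Nash equilibrium of the static PAG on $\mathbb{G}(0)$: any unilateral deviation $u'_k$ feasible on $\mathbb{G}(0)$ still satisfies $u_{ij} = u_{ji} = 0$, so every country's state under the deviated profile is the same on the two graphs, and the only difference in preferences (the condition on $j$'s state for $i$, or $i$'s state for $j$) is automatically consistent because those states are invariant. Hence a profitable deviation on $\mathbb{G}(0)$ would be a profitable deviation on $\mathbb{G}$, contradicting the Nash hypothesis. For subgame perfection under Decision Rule 1, the layer-1 subgame is an unconstrained PAG on $\mathbb{G}$, so selecting $U^{*}$ as the continuation for every history yields terminal outcome $U^{*}$ after every layer-0 move, making $U^{*}(0)$ trivially optimal. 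Under Decision Rule 1.1, the combined extensive-form strategy $(u_m(0), u_m(1))$ of any country can realize any $u_m$ feasible in the static PAG on $\mathbb{G}$: for $m \notin \{i,j\}$ directly via $u_m(0)$, and for $m \in \{i,j\}$ by first absorbing the target value of $u_{mj}$ into the self-allocation $u_{mm}(0)$ and then releasing it at layer 1 when $j$ becomes a new relation. The effective strategy space thus coincides with the static-PAG strategy space on $\mathbb{G}$, and optimality of $u^{*}_m$ transfers to the extensive form.

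The hardest part will be verifying subgame perfection at off-equilibrium-path subgames under Decision Rule 1.1: after a deviation by some country $k \notin \{i,j\}$ at layer 0, the two-player sub-subgame in which $i$ and $j$ must redistribute between self-allocation and the newly added $(i,j)$ edge need not admit $(u_{ij}(1), u_{ji}(1)) = (0, 0)$ as a Nash equilibrium. Ruling out profitable layer-0 deviations then requires selecting a Nash continuation of this sub-subgame for every off-path history and showing, via the state-invariance above together with the hypothesis that $U^{*}$ is a Nash equilibrium on $\mathbb{G}$, that the resulting terminal outcome cannot strictly improve the deviator's preferences. This step relies essentially on $u^{*}_{ij} = u^{*}_{ji} = 0$, which confines the off-path perturbations to components of $U$ that do not enter any country's state under $U^{*}$.
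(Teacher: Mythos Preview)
Your construction---the two-step ascending chain $\mathbb{G}(0)=\mathbb{G}\setminus\{(i,j)\}$, $\mathbb{G}(1)=\mathbb{G}$ with constant path $U^{*}(0)=U^{*}(1)=U^{*}$---is exactly the paper's proof, and your state-invariance argument and off-path analysis under Decision Rule~1.1 supply justification that the paper leaves implicit. The approaches coincide; your version is simply more detailed.
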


\begin{proof} Let an ascending chain of two environment graphs $\mathbb{G}(0)$, $\mathbb{G}(1)$ be such that $\mathbb{G}(1) = \mathbb{G}$, and $\mathbb{G}(0) \subset \mathbb{G}_{1}$ in the sense that $\mathcal{E}_{1} - \mathcal{E}_{0} = (i,j)$. 

Proceeding backwardly, a power allocation path $U(0), U(1)$ is mapped to a sequence of two subgames $\kappa_{0}$ and $\kappa_{1}$. $\kappa_{0}$ is the subgame where $U_{1}$ is chosen, and $\kappa_{1}$ is the subgame where the path $U(0), U(1)$ is chosen, which can be regarded as the extensive-form game itself. 

Let $U(1) = U(0) = U^*$. At $t = 1$, $U_{1}$ is by definition pure strategy Nash equilibrium in $\kappa_{0}$. At $t = 0$, none of the agents would like to deviate from the strategies in $U(0)$. Therefore, the path $U(0), U(1)$ is a subgame perfect Nash equilibrium.

\end{proof}

\begin{theorem} Given any sequence of spanning subgraphs of the environment graph of the normal-form PAG $\mathbb{G}$, a subgame perfect Nash equilibrium can always be found in the extensive-form PAG on that sequence by assuming Decision Rule 1.

\end{theorem}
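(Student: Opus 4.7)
The plan is to prove existence by constructing an equilibrium path layer by layer through backward induction, exploiting the fact that under Decision Rule 1 every country's allocation is collected into $u_{ii}(t)$ and freely reallocated at each time step, so the subgames at different layers decouple. This reduces the problem to (i) showing that the one-shot normal-form PAG on any spanning subgraph $\mathbb{G}(t)$ has a pure-strategy Nash equilibrium, and (ii) verifying that concatenating such equilibria yields an equilibrium at every subgame of the extensive form.

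For step (i), I would exhibit the explicit profile in which each country allocates all of its power to itself: $u_{ii}(t) = p_i$ and $u_{ij}(t) = 0$ for $j \neq i$. Since $i \in \scr{F}_i$, this is a legal allocation. Under it, $\sigma_i = p_i$ and $\tau_i = 0$, so every country is safe. A single country $i$ contemplating a unilateral deviation can only shift power among $\scr{F}_i \cup \scr{A}_i$: redirecting to an adversary $j \in \scr{A}_i$ keeps $\sigma_i$ fixed and leaves $\tau_i = 0$, while redirecting to a friend $j \in \scr{F}_i$ can only decrease $\sigma_i$ without affecting $\tau_i$. Moreover, any adversary $j$ retains $\sigma_j \geq p_j$ because no opponent allocates against it, so no deviation can move an adversary out of the safe state in a way that strictly improves $i$'s preference under the weak/strong preference definitions given in Section II. Hence this profile is a pure-strategy Nash equilibrium of the one-shot PAG on $\mathbb{G}(t)$.

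For step (ii), I would proceed by backward induction on $t \in \{n, n-1, \ldots, 0\}$. At layer $n$, let $U^*(n)$ be the equilibrium from step (i) on $\mathbb{G}(n)$. For the inductive step at layer $t < n$, consider any subgame $\kappa \in \scr{\kappa}(t)$. Under Decision Rule 1, the matrix $U(t)$ places no constraint on $U(t+1), \ldots, U(n)$ because every country resets $u_{ii}(t+1) = p_i$ before reallocating; therefore, along the constructed continuation $U^*(t+1), \ldots, U^*(n)$ the terminal state $x_i$ realized at $t = n$ is independent of what was chosen at layer $t$. Consequently, the problem of choosing $U(t)$ in best response, given that continuation, collapses exactly to the one-shot normal-form PAG on $\mathbb{G}(t)$, and we may take $U^*(t)$ to be the self-allocation equilibrium from step (i). By Definition 1, the path $U^*(0), U^*(1), \ldots, U^*(n)$ is then a subgame perfect Nash equilibrium in the extensive-form PAG.

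The main obstacle is step (ii): being fully rigorous about the decoupling under Decision Rule 1. In particular, one must check that a deviator at layer $t$ cannot use that deviation as a form of commitment that other countries then condition upon in their own equilibrium responses at layer $t+1$. The cleanest way to handle this is to observe that in our constructed strategy profile every country's layer-$(t+1)$ action is the history-independent self-allocation $u_{ii}(t+1) = p_i$; this prescription is itself a Nash equilibrium of the layer-$(t+1)$ subgame regardless of the history, so there is nothing a deviator at layer $t$ can do to alter the continuation payoff. Once this history-independence of the constructed continuation is made explicit, the backward induction goes through without further issue.
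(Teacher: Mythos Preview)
Your overall architecture---use Decision Rule 1 to decouple the layers, invoke existence of a pure-strategy Nash equilibrium in the one-shot PAG on each $\mathbb{G}(t)$, and concatenate via backward induction---is exactly the paper's (very terse) argument. The paper simply appeals to the existence theorem for pure-strategy Nash equilibria in the normal-form PAG proved in \cite{allocation}, then declares $U^*(0),\ldots,U^*(n)$ a subgame perfect equilibrium.

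The genuine gap is in your step (i). The all-self-allocation profile $u_{ii}=p_i$, $u_{ij}=0$ for $j\neq i$ is \emph{not} in general a Nash equilibrium of the one-shot PAG. Take two adversaries $i,j$ with $p_i>p_j$. In your profile both are safe. If $i$ deviates to $u_{ii}=0$, $u_{ij}=p_i$, then $\sigma_i=\sum_{k\in\scr F_i}u_{ki}+\sum_{k\in\scr A_i}u_{ik}=0+p_i=p_i$ and $\tau_i=0$, so $i$ stays safe; meanwhile $\sigma_j=p_j$ is unchanged but $\tau_j=p_i>p_j$, so $j$ becomes unsafe. All of $i$'s friends remain safe. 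Under the weak-preference axioms in Section~II, $i$ weakly prefers the deviation $V$ over $U$ (every friend is safe in $V$; for adversary $j$, $x_j(V)$ is unsafe; for other adversaries, $x_k(U)$ is safe), but $i$ does \emph{not} weakly prefer $U$ over $V$ (the adversary clause fails at $j$: $x_j(U)$ is safe and $x_j(V)$ is unsafe). Hence $i$ has a strictly profitable unilateral deviation. Your sentence ``any adversary $j$ retains $\sigma_j\geq p_j$ because no opponent allocates against it'' conflates $\sigma_j$ with the comparison $\sigma_j$ vs.\ $\tau_j$: it is $\tau_j$ that the deviator raises.

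The fix is either to invoke the general existence result from \cite{allocation} (as the paper does), or to replace your explicit candidate by a profile that actually is an equilibrium on an arbitrary spanning subgraph---for instance the balanced equilibrium of \cite{balancing}, whose existence on any environment graph is established there. Once step (i) is repaired in this way, your step (ii) and the history-independence remark go through and match the paper's argument.
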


\begin{proof}By Decision Rule 1, at time $t \in \{0, 1, \ldots, n\}$, suppose countries are playing the normal-form PAG on $\mathbb{G}(n)$. Since it has been shown that any normal-form PAG has a pure straetgy Nash equiilbrium, let $U^*(n)$ be a pure strategy Nash equilibrium of this game. 

$U^*(0), U^*(1), \ldots, U^*(n)$ is an optimal path for the countries in  the extensive game and obviously a subgame perfect Nash equilibrium.

\end{proof}

\begin{remark}
Given a particular subgame perfect Nash equilibrium, the possible sequences of spanning subgraphs on which this equilibrium is realized in the corresponding extensive-form game always exists (by definition) but may not be unique. For example, in Lemma 1, there exists an opposite, descending chain of the spanning subgraphs $\mathbb{G}(1)$, $\mathbb{G}(0)$ where $U(1), U(0) = U^*$ is the subgame perfect Nash equilibrium. 

\end{remark}


\begin{theorem}

A balanced equilibrium $U^*$ of the normal-form PAG on $\mathbb{G}$ as defined in \cite{balancing} can be realized on the equilibrium path of a subgame prefect Nash equilibrium in an extensive-form PAG on a sequence of spanning subgraphs of $\mathbb{G}$ by assuming Decision Rule 1.

\end{theorem}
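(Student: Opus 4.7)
The plan is to leverage the coalitional structure of a balanced equilibrium together with the reallocation freedom afforded by Decision Rule 1, constructing an ascending chain along which the constant path taking value $U^*$ at the terminal stage is an SPNE.

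First, I would invoke the characterization of a balanced equilibrium from \cite{balancing}: the vertex set of $\mathbb{G}$ partitions into two coalitions $C_1, C_2$ such that intra-coalition edges are friendly and inter-coalition edges adversarial, and the allocations in $U^*$ distribute power in a coalitional fashion so that each country's total support and total threat are in balance. I would then order the edges of $\mathbb{G}$ so that intra-coalition friendship edges appear before inter-coalition adversarial edges, and form an ascending chain $\mathbb{G}(0) \subset \mathbb{G}(1) \subset \cdots \subset \mathbb{G}(n) = \mathbb{G}$ by adding one edge at a time in this order. At each stage $t$, let $U^*(t)$ be the restriction of $U^*$ to the edges present in $\mathbb{G}(t)$, with the power freed by absent edges of each row $i$ returned to the reserve $u_{ii}(t)$. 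Under Decision Rule 1 this is a feasible strategy profile at every stage, since every country may freely reallocate its total power $p_i$.

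Next, I would argue by backward induction that the path $U^*(0), U^*(1), \ldots, U^*(n)$ is subgame perfect. The base case $t = n$ is immediate, because $U^*(n) = U^*$ is a Nash equilibrium of the normal-form PAG on $\mathbb{G}$ by hypothesis. For the inductive step at stage $t < n$, I would show that no country can strictly improve its terminal state $x_i(U(n))$ by deviating at $t$: the continuation play of the constant path leads back to the balanced equilibrium at stage $n$, and any one-stage deviation on the smaller subgraph $\mathbb{G}(t)$ cannot induce a strictly preferred continuation, since the coalitional structure of $U^*$ already balances $\sigma_i$ and $\tau_i$ within the portion of the network realized at time $t$. This combined with the Nash property of $U^*(t)$ on $\mathbb{G}(t)$ then yields subgame perfection.

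The main obstacle will be verifying that the restriction $U^*(t)$ is itself a Nash equilibrium of the PAG on the subgraph $\mathbb{G}(t)$, which requires carefully tracking how $\sigma_i$ and $\tau_i$ transform as edges are successively added and confirming that the coalitional decomposition of $U^*$ is compatible with each intermediate subgraph. If this direct verification proves intractable, a fallback is to apply Theorem 1 iteratively: in a balanced equilibrium, the coalitional structure typically produces friend or adversary pairs $(i,j)$ with $u^*_{ij} = u^*_{ji} = 0$ (for instance, when a country is already safe without actively allocating to a given neighbor in its coalition), so that such edges may be peeled off one at a time to build the desired chain while preserving $U^*$ as an equilibrium at each step.
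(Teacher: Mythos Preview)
Your proposal rests on a mischaracterization of the object in question. The ``balanced equilibrium'' of \cite{balancing}, as the paper's own proof recalls, is \emph{not} a structural-balance partition into two coalitions $C_1,C_2$; it is an allocation $U^*$ satisfying (i) $u^*_{ii}=p_i$ whenever $\mathcal{A}_i=\emptyset$, (ii) $u^*_{ii}=0$ and $\sum_{j\in\mathcal{A}_i}u^*_{ij}=p_i$ whenever $\mathcal{A}_i\neq\emptyset$, and (iii) $u^*_{ij}=u^*_{ji}$ for every adversarial pair $(i,j)$. There is no coalitional decomposition to exploit, so the edge ordering ``friendly before adversarial'' and the claim that ``$\sigma_i$ and $\tau_i$ are balanced within the portion of the network realized at time $t$'' have no footing. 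The obstacle you flag---that the restriction $U^*(t)$ need be a Nash equilibrium on each intermediate subgraph---is therefore not merely technical but unsupported by the actual structure of $U^*$.

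The paper's argument is far more elementary and uses only a two-step chain. If $\mathcal{R}_{\mathcal{A}}\neq\emptyset$, pick any adversarial edge $(i,j)$, set $\mathbb{G}(1)=\mathbb{G}$ and $\mathbb{G}(0)=\mathbb{G}\setminus\{(i,j)\}$, and let $U(1)=U^*$ while $U(0)$ differs only in that the symmetric allocations $u^*_{ij}=u^*_{ji}$ are returned to the reserves $u_{ii}(0),u_{jj}(0)$. The symmetry condition (iii) is precisely what makes $U(0)$ a Nash equilibrium on $\mathbb{G}(0)$, and the pair $U(0),U(1)$ is then an SPNE path. If $\mathcal{R}_{\mathcal{A}}=\emptyset$, condition (i) forces every $u^*_{ij}=0$ for $i\neq j$, so any friendly edge has zero mutual allocation and Theorem~1 applies directly. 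Your fallback idea of invoking Theorem~1 is thus the one that is actually close to the paper's method, but no iteration is needed: a single edge removal, chosen according to whether adversarial edges exist, suffices.
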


\begin{proof} A power allocation matrix $U$ of a PAG is a balanced equilibrium if
 
 \begin{enumerate}
\item $\forall i \in \mathbf{n}$ such that $ \mathcal{A}_{i}$ is the empty set, $u_{ii} = p_{i}$.\label{c1}
\item $\forall i \in \mathbf{n}$ such that $\mathcal{A}_{i} $ is nonempty,
 $u_{ii} = 0$ and $$\sum_{j \in \mathcal{A}_{i}}u_{ij} = p_{i}$$ \label{c2}
\item $\forall (i,j) \in \mathcal{R}_{\mathcal{A}}$, $u_{ij} = u_{ji}$.
\end{enumerate}

Case I: $\mathcal{R}_{\mathcal{A}} \neq \emptyset$. Take an adversary pair $(i,j) \in \mathcal{R}_{\mathcal{A}}$. Let an ascending chain of spanning subgraphs of $\mathbb{G}$, $\mathbb{G}(0)$, $\mathbb{G}(1)$ be such that $\mathbb{G}(1) = \mathbb{G}$, $\mathbb{G}(0) \subset \mathbb{G}(1)$ in the sense that $\mathcal{E}_{1} - \mathcal{E}_{0} = (i,j)$. 

Assuming Decision Rule 1, let $U(1) = U^*$ where $U^*$ is a balanced equilibrium. Let $U(0)$ be only different from $U(1)$ in that the allocations $u_{ij}(0) = u_{ji}(0) = 0$, and $u_{ii}(0) = u_{jj}(0) = u_{ij}(1) = u_{ji}(1).$

A power allocation path $U(0), U(1)$ is thus mapped to a sequence of two subgames $\kappa_{0}$ and $\kappa_{1}$. $\kappa_{0}$ is the subgame where $U_{1}$ is chosen, and $\kappa_{1}$ is the subgame where the path $U(0), U(1)$ is chosen. 

$U_{0}$ is a pure strategy Nash equilibrium of $\kappa_{0}$. $U(0), U(1)$ constitutes a subgame perfect Nash equilibrium in the extensive form game asssuming the above graph sequence.

Case II: $\mathcal{R}_{\mathcal{A}} = \emptyset$. There always exists a pure strategy Nash equilibrium of the normal-form PAG where two countries $i$ and $j$ such that $u^*_{ij} = u^*_{ji} = 0$. 

By Theorem 1, $U^*$ can always be realized on the equilibrium path of a subgame perfect Nash equilibrium in an extensive form of the PAG.

\end{proof}

{\it Example 1:} Let the environment graph of the normal-form PAG $\mathbb{G}$ is a Petersen graph. Assume that each of the ten countries has two adversaries and two friends (itself and one external) as shown in Fig. 2(b). Assume in addition the ascending chain of spanning subgraphs $\mathbb{G}(0), \mathbb{G}(1)$, where  $\mathbb{G}(0)$ contains two separate cycles of adversary relations and $\mathbb{G}(1) = \mathbb{G}$. As illustrated in the allocation graphs in Fig. 3 whose definition is in \cite{survival}, $U(1)$ is a balanced equilibrium and the power allocationpath $U(0), U(1)$ is a subgame perfect Nash equilibrium.

\begin{figure}

    \begin{subfigure}{0.45\textwidth}
        \centering
    \includegraphics[width=0.8\linewidth]{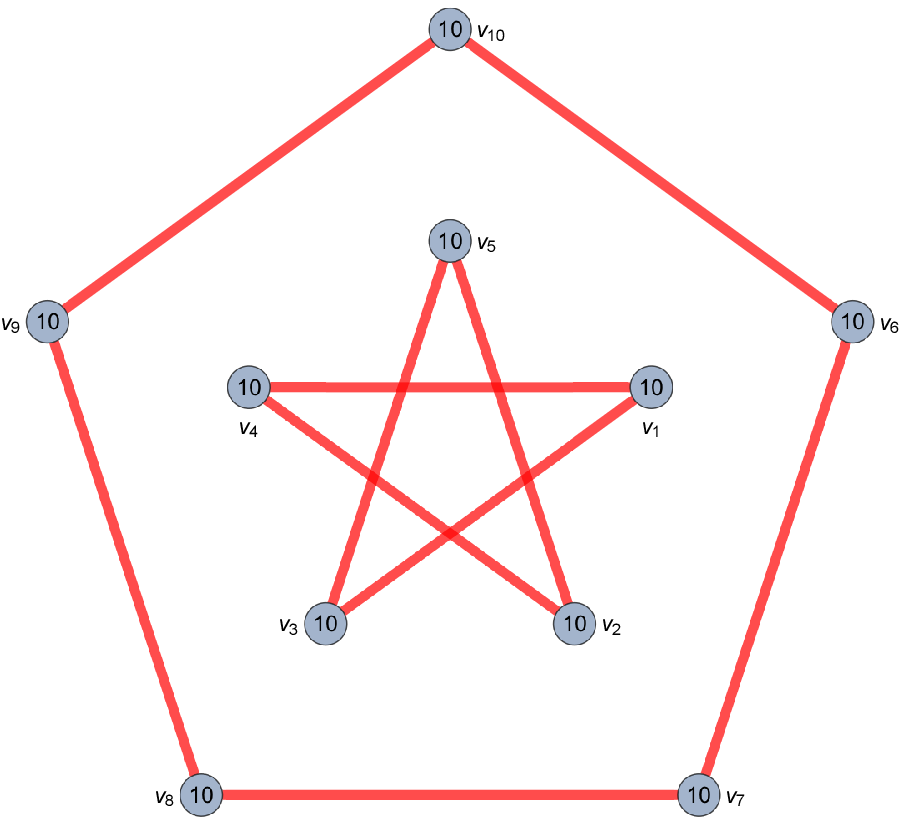}
    \caption{$\mathbb{G}(0)$}
    \label{fig:my_label}
    \end{subfigure}
    \begin{subfigure}{0.45\textwidth}
        \centering
    \includegraphics[width=0.8\linewidth]{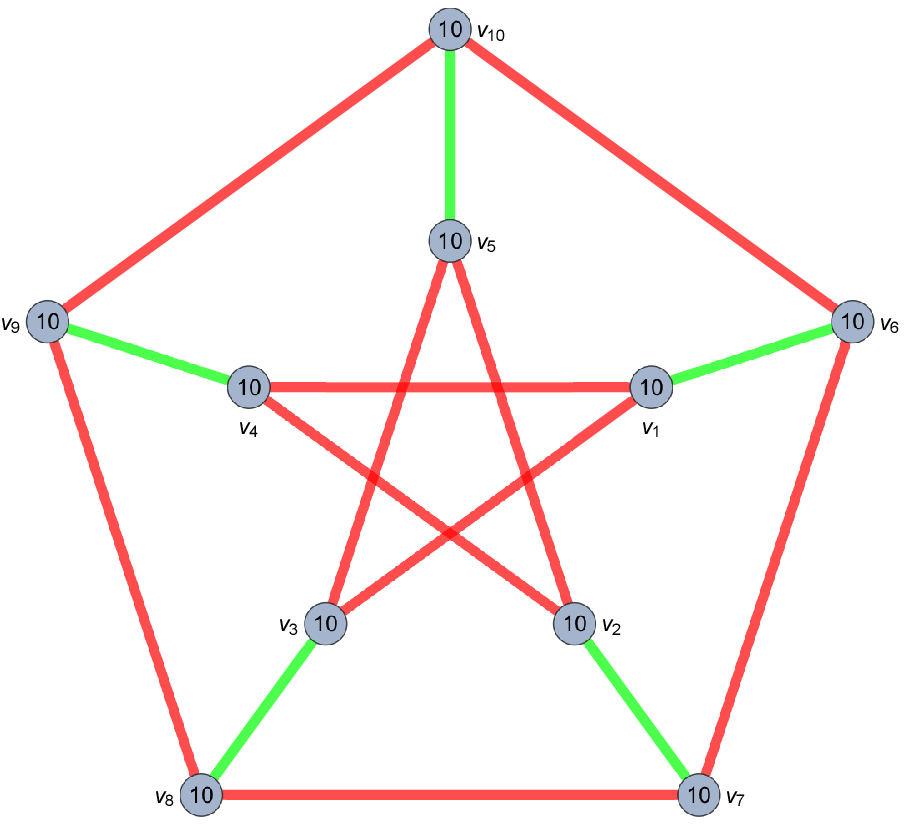}
    \caption{$\mathbb{G}(1)$}
    \label{fig:my_label}
    \end{subfigure}
    \caption{An ascending chain of spanning subgraphs}
\end{figure}

\begin{figure}
    \begin{subfigure}{0.45\textwidth}
        \centering
    \includegraphics[width=0.77\linewidth]{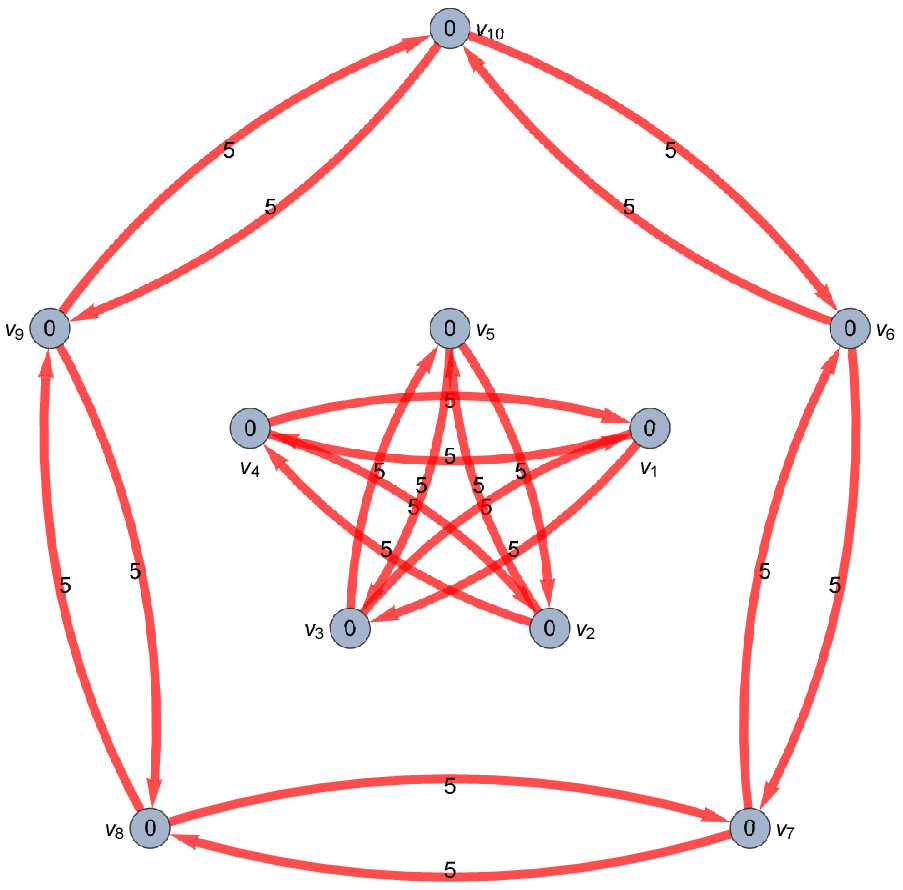}
    \caption{$U(0)$}
    \label{fig:my_label}
    \end{subfigure}
    \begin{subfigure}{0.45\textwidth}
        \centering
    \includegraphics[width=0.77\linewidth]{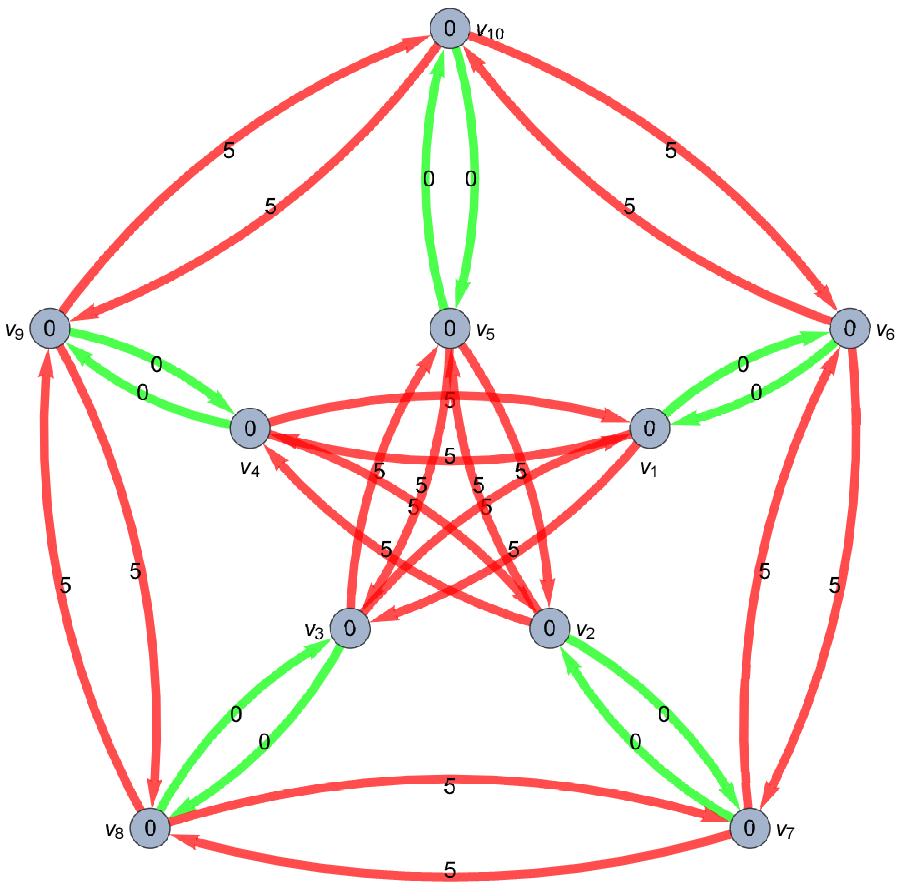}
    \caption{$U(1)$}
    \label{fig:my_label}
    \end{subfigure}
    \caption{Allocations on the ascending chain}
\end{figure}

\begin{theorem}
 The constructive algorithm in \cite{allocation} derives a subgame perfect Nash equilibrium in an extensive-form PAG on a sequence of spanning subgraphs of $\mathbb{G}$ by assuming Decision Rule 1.1. 

\end{theorem}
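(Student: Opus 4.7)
The plan is to identify the iteration steps of the constructive algorithm of \cite{allocation} with the time steps of an ascending chain of spanning subgraphs, and to check that the per-step output of the algorithm is compatible with Decision Rule 1.1 as well as being a Nash equilibrium of the subgame played at that layer. The algorithm in \cite{allocation} builds a pure strategy Nash equilibrium by processing the edges (or neighborhoods) of $\mathbb{G}$ incrementally, assigning at each iteration a fresh block of power to the newly considered relationships while never touching power that has already been committed to previously processed relationships. This incremental monotone structure is exactly what Decision Rule 1.1 permits at each transition $\mathbb{G}(t-1)\to\mathbb{G}(t)$.

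First I would state the algorithm in its iterative form, indexing its stages by $t=0,1,\dots,n$, and associate to each stage $t$ the spanning subgraph $\mathbb{G}(t)$ consisting of all edges that have been processed up through stage $t$. Because each later stage adds edges to the previous one, the sequence $\mathbb{G}(0)\subset\mathbb{G}(1)\subset\cdots\subset\mathbb{G}(n)=\mathbb{G}$ is an ascending chain, so the descending part of Decision Rule~1.1 (collecting power back from edges that have disappeared) is vacuous. Next I would verify that the allocation $U(t)$ produced at stage $t$ is obtained from $U(t-1)$ by keeping every component $u_{ij}(t-1)$ with $j\in\mathcal{F}_i(t-1)\cup\mathcal{A}_i(t-1)$ unchanged and redistributing only the remaining reserve $u_{ii}(t-1)$ to new neighbors in $\mathcal{F}_i(t)-\mathcal{F}_i(t-1)$ and $\mathcal{A}_i(t)-\mathcal{A}_i(t-1)$. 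This is exactly the behavior prescribed by Decision Rule 1.1, so every power allocation path generated by the algorithm is feasible in the extensive-form PAG under that rule.

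Next I would argue subgame perfection by backward induction along the chain. For each layer $t$, the restriction of the algorithm's output to the subgraph $\mathbb{G}(t)$ is itself the result of the algorithm applied to the smaller environment graph $\mathbb{G}(t)$, and by the correctness theorem of \cite{allocation} this restricted allocation $U^*(t)$ is a pure strategy Nash equilibrium of the normal-form PAG on $\mathbb{G}(t)$. Since Decision Rule~1.1 forces the unchanged components of $U(t-1)$ to coincide with the corresponding components of $U(t)$, a unilateral deviation by country $i$ at layer $t-1$ can only affect the portion of its power that will remain reserved at layer $t$; by the Nash property of $U^*(t)$ no such deviation can strictly improve the terminal state vector $x_i(U(n))$. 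This establishes that the path $U^*(0),U^*(1),\ldots,U^*(n)$ is an equilibrium in every subgame it traverses.

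The main obstacle I expect is the verification that the algorithm of \cite{allocation}, as written, actually admits the incremental interpretation above; its original presentation is for the static normal-form game, and one must confirm that its stages can be relabeled as time steps against a legal ascending chain without violating the total-power or feasibility constraints at any intermediate subgame. If any intermediate allocation $U^*(t)$ fails to be a Nash equilibrium of the PAG on $\mathbb{G}(t)$ in its own right, the backward-induction argument breaks down; hence the bulk of the technical effort will lie in showing that the partial outputs of the algorithm are themselves equilibria of the subgraph games, not merely consistent prefixes of the final equilibrium on $\mathbb{G}$.
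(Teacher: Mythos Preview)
Your proposal is correct and takes essentially the same approach as the paper: identify the algorithm's iterations with an ascending chain of spanning subgraphs (the paper specifically sets $\mathcal{E}(0)=\mathcal{R}_{\mathcal{F}}$ and adds one adversarial edge per step, with $U(0)=\mathrm{diag}(p_1,\dots,p_n)$), verify compatibility with Decision Rule~1.1, and then argue that no country has a profitable deviation at any layer. Your identification of the main technical obstacle---that each intermediate $U^*(t)$ must itself be a Nash equilibrium on $\mathbb{G}(t)$---is in fact more explicit than the paper's own proof, which simply asserts without further detail that ``no country $i$ will want to deviate.''
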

\begin{proof}First, let  $q \in \mathbb{N}$ be the number of adversarial pairs in $\mathcal{R}_{\mathcal{A}}$, and $\mathbf{m} = \{1, 2, \dots, q\}$ be the set of labels. Let $\gamma: \mathcal{R}_{\mathcal{A}} \mapsto \mathbf{m}$. 
At time $t$, $r^{-1}(t) = \{i,j\}$ is the $t$-th adversary pair being traversed by the algorithm.  Let the remaining power of countries $i$ and $j$ involved in this relation be $z_{i}(t)$ and $z_{j}(t)$, $t \in \{1, 2, \ldots, q\}$. 

Given the environment graph of the PAG $\mathbb{G}$, the algorithm in \cite{allocation} proceeds with traversing each adversary relation and constructing allocations on the relation consecutively. This actually gives rise to an ascending chain of environment graphs that reaches the $\mathbb{G}$ at the last step, $\mathcal{E}(0) = \mathcal{R}_{\mathcal{F}}, \mathcal{E}(1) = \mathcal{R}_{\mathcal{F}} \cup r^{-1}(1), \ldots, \mathcal{E}(q+1) = \mathcal{R}_{\mathcal{F}} \cup r^{-1}(1) \cup r^{-1}(2) \cup \ldots ...\cup r^{-1}(q)$. This algorithm is consistent with Decision Rule 1.1.

Let a sequence of power allocation matrices $U(0), U(1), \ldots, U(q)$ represent the sequence of allocations constructed by the algorithm where $U(0) = \text{diagonal}\{z_{1}(0), z_{2}(0), \ldots, z_{n}(0)\} =  \text{diagonal}\{p_{1}, p_{2}, \ldots, p_{n}\},~\text{and}~U(t+1) =  \text{diagonal}\{z_{1}(t), z_{2}(t), z_{i}(t) - \min\{z_{i}(t), z_{j}(t)\}, \ldots,  z_{j}(t) - \min\{z_{i}(t), z_{j}(t)\}, z_{n}(t)\}  + \sum_{r^{-1}(t) \in \mathcal{R}_{\mathcal{A}}, t \in \mathbf{q}}(e_{i}^{T}e_{j} + e_{j}^{T}e_{i}) \min\{z_{i}(t), z_{j}(t)\}, t \in \{0, 1, \ldots, q\}$.

A sequence of $q+1$ subgames is derived. Proceeding backwardly, $\kappa_{t}$ is the subgame where the path $U(q - t +1),U(q-t+2), \ldots,U(q)$ is chosen, $t \in \{0, 1, \ldots, q+1\}$. 

In this extensive form game, no country $i$ will want to deviate from its strategy $u_{i}(t)$, $t \in \{0, 1, \ldots, n\}$. Therefore, the sequence of allocations represent a subgame perfect Nash equilibrium. 

\end{proof}

\begin{corollary}
Given a normal-form PAG on $\mathbb{G}$, if for $i \in \mathbf{n}$, its power is no larger than the total power of its two adversaries $p_{i} \leq \sum_{j \in \mathcal{A}_{i}}p_{j}$, there always exists a subgame perfect Nash euilibrium in an extensive-form PAG on a sequence of spanning subgraphs of $\mathbb{G}$ that predicts $i$ to be precarious at the last step $n$ of the power allocation path, that is, $\sigma_{i}(U(n)) = \tau_{i}(U(n))$, by assuming Decision Rule 1.1.

\end{corollary}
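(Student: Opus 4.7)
The plan is to apply Theorem 4 with a carefully chosen traversal order of the adversarial pairs so that country $i$ ends up precarious at the final step. By Theorem 4, for any ordering $\gamma$ of $\mathcal{R}_{\mathcal{A}}$, the constructive algorithm of \cite{allocation} produces a power allocation path that is a subgame perfect Nash equilibrium of the induced extensive-form PAG under Decision Rule 1.1. The remaining task is therefore to select $\gamma$ so that the resulting terminal matrix $U(n)$ satisfies $\sigma_i(U(n)) = \tau_i(U(n))$.

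I would choose any $\gamma$ that lists the pairs $\{i,j\}$ for $j \in \mathcal{A}_i$ at the very beginning of the traversal. Writing $\mathcal{A}_i = \{j_1, j_2\}$ and processing $\{i, j_1\}$ first, the algorithm sets $u_{i j_1} = u_{j_1 i} = \min\{p_i, p_{j_1}\}$ and subtracts this amount from $z_i$. Then on processing $\{i, j_2\}$, it sets $u_{i j_2} = u_{j_2 i} = \min\{z_i, p_{j_2}\}$. A direct case check on whether $p_i \le p_{j_1}$ or $p_{j_1} < p_i \le p_{j_1} + p_{j_2}$, combined with the hypothesis $p_i \le p_{j_1} + p_{j_2}$, shows that in both cases country $i$ exhausts its power onto its adversaries, i.e., $\sum_{j \in \mathcal{A}_i} u_{ij}(n) = p_i$. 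By the symmetric min-rule of the algorithm, $u_{ij}(n) = u_{ji}(n)$ for each $j \in \mathcal{A}_i$, so $\sum_{j \in \mathcal{A}_i} u_{ji}(n) = p_i$ as well.

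Since the constructive algorithm only deposits nonzero off-diagonal mass on adversarial pairs that it traverses, no friend of $i$ allocates any power to $i$. Consequently, $\sigma_i(U(n)) = \sum_{j \in \mathcal{F}_i} u_{ji}(n) + \sum_{j \in \mathcal{A}_i} u_{ij}(n) = 0 + p_i = p_i$ and $\tau_i(U(n)) = \sum_{j \in \mathcal{A}_i} u_{ji}(n) = p_i$. Hence $\sigma_i(U(n)) = \tau_i(U(n))$, and country $i$ is precarious along this subgame perfect Nash equilibrium path.

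The main obstacle I anticipate is verifying that the allocations $u_{i j_k}(n), u_{j_k i}(n)$ fixed during the first two steps are not disturbed by subsequent traversals of adversarial pairs not involving $i$. This is precisely guaranteed by Decision Rule 1.1, which freezes allocations on relations that persist across layers, and this is also why the statement must specify this decision rule rather than the more permissive Decision Rule 1. A secondary, purely bookkeeping issue is generalizing from $|\mathcal{A}_i| = 2$ (suggested by the phrase ``its two adversaries'') to arbitrary $|\mathcal{A}_i|$, which follows by the same min-rule argument applied inductively over the pairs involving $i$.
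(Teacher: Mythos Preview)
Your proposal is correct and follows essentially the same approach as the paper: both apply Theorem~4 (the constructive algorithm of \cite{allocation}) with an ordering that traverses all of country $i$'s adversarial pairs first, then observe that the hypothesis $p_i \le \sum_{j\in\mathcal{A}_i} p_j$ forces $z_i$ to reach zero after those steps, yielding $\sigma_i(U(n)) = \tau_i(U(n))$. Your write-up is in fact more explicit than the paper's in verifying the equality (you compute both sides as $p_i$ and note that friends contribute nothing under the algorithm), and you correctly flag the role of Decision Rule~1.1 in freezing the early allocations.
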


\begin{proof}  Let $q \in \mathbb{N}$ be the number of adversarial pairs in $\mathcal{R}_{\mathcal{A}}$, and $\mathbf{m} = \{1, 2, \dots, q\}$ be the set of labels. Suppose $i$ has $g$ adversaries ($ g \leq q$, and $g \in \mathbb{N}$). Without loss of generality, assume that the set of labels for these adversaries is $\mathbf{m'} = \{1, 2, \ldots, g\} \subset \mathbf{m}$. As in Theorem 3, let $\gamma: \mathcal{R}_{\mathcal{A}} \mapsto \mathbf{m}$, and assume the same algorithm in \cite{allocation}.

From $t = 0$ to $t = g$, the algorithm proceeds with each adversary relation of $i$ and constructing allocations on the relation consecutively. From $t = g+1$ to $t = q$, the algorithm will traverse the rest of the adversarial pairs. 

A sequence of power allocation matrices $U(0), U(1), \ldots, U(q)$ constructed by the algorithm is $U(0) = \text{diagonal}\{z_{1}(0), z_{2}(0), \ldots, z_{n}(0)\} =  \text{diagonal}\{p_{1}, p_{2}, \ldots, p_{n}\},~\text{and}~ 
U(t+1) =  \text{diagonal}\{z_{1}(t), z_{2}(t), z_{i}(t) - \min\{z_{i}(t), z_{j}(t)\}, \ldots,  z_{j}(t) - \min\{z_{i}(t), z_{j}(t)\}, z_{n}(t)\}  + \sum_{r^{-1}(t) \in \mathcal{R}_{\mathcal{A}}, t \in \mathbf{q}}(e_{i}^{T}e_{j} + e_{j}^{T}e_{i}) \min\{z_{i}(t), z_{j}(t)\}, t \in \{0, 1, \ldots, q\}$. 

This gives rise to an ascending chain of environment graphs that reaches the $\mathbb{G}$ at the last step, $\mathcal{E}(0) = \mathcal{R}_{\mathcal{F}}, \mathcal{E}(1) = \mathcal{R}_{\mathcal{F}} \cup r^{-1}(1), \ldots, \mathcal{E}(g) = \mathcal{R}_{\mathcal{F}} \cup r^{-1}(1) \cup r^{-1}(2) \cup \ldots ...\cup r^{-1}(g), \ldots, \mathcal{E}(q+1) = \mathcal{R}_{\mathcal{F}} \cup r^{-1}(1) \cup r^{-1}(2) \cup \ldots ...\cup r^{-1}(q)$.


For $t \in \{m, m+1, \ldots, q\}$, $z_{i}(m) = 0$. By Theorem 3, $U(0), U(1), \ldots, U(n)$ is a subgame perfect Nash equilibrium. Therefore, for $t \in \{m, m+1, \ldots, q\}$, $\sigma_{i}(U(t)) = \tau_{i}(U(t))$.

\end{proof}

\begin{theorem}

Given a normal-form PAG on a complete graph of only adversary relations, if countries' power condition satisfies $p_{i} \leq \sum_{j \in \mathcal{A}_{i}}p_{j}$,  a subgame perfect Nash equilibrium in which only a country is safe, that is, $\sigma_{i}(U(n)) > \tau_{i}(U(n))$ at the last step $n$ of the power allocation path, can be guaranteed through a class of sequences of spanning subgraphs assuming Decision Rule 1.1. 

\end{theorem}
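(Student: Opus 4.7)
The plan is to defer every adversary edge incident to a chosen target country $i^*$ to the last stage of an ascending chain, so that the adversaries of $i^*$ will have already exhausted (or nearly exhausted) their power on one another before they are able to attack $i^*$. Concretely, I would fix $i^*$, set $\mathbb{G}(0)$ to be the edgeless spanning graph, let $\mathbb{G}(1)\subset\cdots\subset\mathbb{G}(q')$ traverse the adversary pairs of the induced complete subgraph on $\mathbf{n}\setminus\{i^*\}$ in the labeling used by the algorithm of Theorem 3, and let $\mathbb{G}(q)=\mathbb{G}$ introduce the $n-1$ edges incident to $i^*$ simultaneously at the last step $q=q'+1$. On the initial segment the algorithm of Theorem 3 yields a subgame perfect Nash equilibrium of the reduced PAG on $\mathbf{n}\setminus\{i^*\}$ together with residual reserved powers $z_j(q')\geq 0$ for each $j\neq i^*$.

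I would then compute the terminal states. Because $\mathbb{G}$ is complete with only adversary relations, $\mathcal{F}_i=\{i\}$ for every $i$ and hence $\sigma_i(U(q))=p_i$. Under Decision Rule 1.1, at stage $q$ each $j\neq i^*$ has the single new adversary $i^*$ and must commit its full residual $z_j(q')$ to it, so $\tau_{i^*}(U(q))=\sum_{j\neq i^*}z_j(q')$. Writing $P=\sum_k p_k$, the balanced step of Theorem 3 drives this residual sum to zero whenever the reduced condition $p_j\leq\sum_{k\neq j,\,k\neq i^*}p_k$ holds on the subgraph, and in general bounds it above by $\max\{0,\,2\max_{j\neq i^*}p_j-(P-p_{i^*})\}$. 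Choosing $i^*=\arg\max_j p_j$ and using the hypothesis $p_i\leq\sum_{j\in\mathcal{A}_i}p_j$, equivalent here to $p_i\leq P/2$, this bound yields $\sum_{j\neq i^*}z_j(q')<p_{i^*}$ and hence $\tau_{i^*}<\sigma_{i^*}$, so $i^*$ is safe. Country $i^*$, whose power is still entirely reserved at stage $q$, then distributes $p_{i^*}$ across its adversaries with each $u_{i^*,j}\geq z_j(q')$, which is permitted by the strict surplus $p_{i^*}-\sum_{j\neq i^*}z_j(q')>0$; this pushes every $j\neq i^*$ into an unsafe or precarious state, so no country other than $i^*$ ends up safe.

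Subgame perfection of the entire path is inherited from Theorem 3 on the initial segment; on the terminal subgame no country can profitably deviate, since Decision Rule 1.1 leaves each $j\neq i^*$ with no choice other than committing $z_j(q')$ to $i^*$, and country $i^*$'s chosen distribution of $p_{i^*}$ is a best response because any alternative either preserves or worsens its adversaries' states without altering its own strictly safe status. The main obstacle will be the combinatorial verification that the choice $i^*=\arg\max_j p_j$ produces the strict surplus required for safety; the degenerate case $p_{i^*}=P/2$, where equality in the hypothesis collides with equality in the residual bound, needs either a refined tie-breaking of the Theorem 3 labeling or a mild strengthening of the hypothesis, and this boundary situation is where most of the careful bookkeeping in a full write-up resides.
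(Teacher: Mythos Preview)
Your high-level strategy---defer all edges incident to a designated country $i^{*}$ so that its adversaries exhaust their power against one another first---is exactly the idea the paper uses. The paper, however, implements it with a two-step chain $\mathbb{G}(0)\subset\mathbb{G}(1)=\mathbb{G}$, where $\mathbb{G}(0)$ is the complete graph on $\mathbf{n}\setminus\{i\}$, and then splits into two cases: either some $j$ dominates the reduced graph, in which case $j$ is made to overwhelm the others at $t=0$ and $i$ finishes $j$ off at $t=1$; or no one dominates, in which case the paper invokes directly the \emph{existence} of a balanced equilibrium on $\mathbb{G}(0)$ (from \cite{balancing}), so every $j\neq i$ has residual exactly zero. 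Moreover the paper carries this out for an \emph{arbitrary} $i$, not just $i^{*}=\arg\max_{j}p_{j}$.

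The substantive gap in your proposal is the residual bound. You rely on the constructive algorithm (your ``Theorem~3'') to process the edges of the reduced complete graph and then assert that the residual sum is at most $\max\{0,\,2\max_{j\neq i^{*}}p_{j}-(P-p_{i^{*}})\}$, vanishing whenever the reduced no-dominance condition holds. That is false for the greedy $\min$-allocation algorithm: take $n=4$ with $p_{1}=p_{2}=p_{3}=p_{4}=3$ and $i^{*}=1$. On $\{2,3,4\}$ no one dominates, yet if the algorithm processes $(2,3)$ first it allocates $3$ on each side, leaving $z_{2}=z_{3}=0$ and $z_{4}=3$; the remaining edges contribute nothing. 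The residual sum is $3=p_{i^{*}}$, so $\tau_{i^{*}}(U(q))=\sigma_{i^{*}}(U(q))$ and $i^{*}$ is merely precarious. Worse, country $4$ receives at most $p_{i^{*}}=3$ in threat and has $\sigma_{4}=3$, so it cannot be made unsafe either. The algorithm's output is an SPNE on its own chain, but it is \emph{not} in general a balanced equilibrium in the sense required (namely $u_{jj}=0$ for every $j$ with adversaries). The paper sidesteps this entirely by citing the balanced-equilibrium existence result rather than the algorithm; if you want to keep your multi-step chain, you must either specify an edge ordering that provably drives all residuals to zero in the no-dominance case, or replace the algorithmic step on $\mathbf{n}\setminus\{i^{*}\}$ with a direct appeal to a balanced equilibrium there.
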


\begin{proof}
 Given an arbitrary country $i$ in $\mathbf{n}$, the set of adversarial pairs except for those involving $i$ is denoted as $\mathcal{R}_{\mathcal{A}} - \{\{i, j\}: j \in \mathcal{A}_{i}\}$. Note that $\mathcal{R}_{\mathcal{A}} - \{\{i, j\}: j \in \mathcal{A}_{i}\}$ still make up a complete subgraph of $\mathbb{G}$, $\mathbb{G}' = \{\mathbf{n}- \{i\}, \mathcal{E}'\}$. 
 
 Let an ascending sequence of two spanning subgraphs of $\mathbb{G}$ be such that $\mathbb{G}(0) = \mathbb{G}'$ and $\mathbb{G}(1) = \mathbb{G}$.

\begin{enumerate}
\item If there exists a country $j$ in the subgraph $\mathbb{G}(0)$, that is, $j \in \mathbf{n} - \{i\}$, such that its power is no smaller than that of all other countries (i.e., its adversaries) combined in the subgraph, $$p_{j} > \sum_{k \in \mathcal{A}_{j} - \{i\}}p_{k}.$$  

First, on $\mathbb{G}(0)$, let country $j$ allocate enough to make all of its adversaries other than $i$ unsafe. Construct an $U(0) = [u_{jk}(0)]_{(n-1) \times (n-1)}$ where there holds $$\forall k \in \mathcal{A}_{j}- \{i\}, u_{jk}(0) > p_{k}$$ and $$\sum_{h \in \mathcal{A}_{k}- \{i\}}u_{kh}(0) = p_{k}$$

Then on $\mathbb{G}(0)$, assuming Decision Rule 1.1, let country $i$ allocate enough to make $j$ unsafe. Construct an $U(1) = [u_{ij}]_{n \times n}$ by expanding $U(0)$ to incorporate the allocations between $i$ and countries in $\mathbf{n}- \{i\}$. Let $u_{ij}(1) > p_{j} - \sum_{k \in \mathcal{A}_{j}- \{i\}}u_{jk}(0)$. 

This is feasible because, as assumed, $p_{i} < \sum_{j \in \mathcal{A}_{i}}$.

Then $$p_{j} - p_{i} \leq \sum_{k \in \mathcal{A}_{j}- \{i\}}p_{k} \leq \sum_{k \in \mathcal{A}_{j}- \{i\}}u_{jk}(0)$$

Rearranging terms, $$p_{i} \geq p_{j} - \sum_{k \in \mathcal{A}_{j}- \{i\}}u_{jk}(0).$$

None of the countries would like to deviate from its strategies in $U(0), U(1)$. Therefore, a subgame perfect Nash equilibrium has been derived such that $\sigma_{i}(U(1))  > \tau_{i}(U(1))$ and
$\sigma_{j}(U(1)) < \tau_{j}(U(1))$ for all $j \in \mathbf{n}- \{i\}.$

\item If there does not exist a country in $\mathbb{G}(0)$ such that its power exceeds all other countries in the subgraph.  By \cite{balancing}, a balancing equilibrium $U'$ exists for the PAG of the $n-1$ countries on $\mathbb{G}'$. 

Let it be $U(0) = [u_{jk}(0)]_{(n-1) \times (n-1)}$, where by definition $$\forall j \in \mathbf{n}', u_{jj}(0) = 0; \forall j, k \in \mathbf{n}', u_{jk}(0) = u_{jk}(0);$$ $$\sum_{k \in \mathcal{A}_{j}- \{i\}} u_{jk}(0) = p_{j}.$$

In this case, as consistent with Decision Rule 1.1, expand $U(0)$ to incorporate the allocations between $i$ and countries in $\mathbf{n}- \{i\}$ to obtain $U(1)$. $\forall j \in \mathbf{n}- \{i\}$, let  $u_{ij}(0) = \frac{p_{i}}{n-1}$. 

None of the countries would like to deviate from its strategies in $U(0), U(1)$. Then a subgame perfect Nash equilibrium has been derived such that $\sigma_{i_0}(U(1)) > \tau_{i_0}(U(1))$ and
$\sigma_{j}(U(1)) < \tau_{j}(U(1))$ for all $j \in \mathbf{n}- \{i\}.$

\end{enumerate}

\end{proof}

\begin{remark}

An important implication from Corollary 1 and Theorem 5 is that the particular state $i$ can possibly be in a PAG becomes controllable if the some particular conditions hold, thus fulfilling the purpose of equilibrium selection. The control is specifically done by a sequence of spanning subgraphs of the environment graph as well as a suitably defined decision rule. 

\end{remark}

\section{Example: Application to Other Network Games}

Consider a two-step ascending chain of a road network where at $t = 0$ only road A is opened up and only at $t = 1$ road B is added. Let each agent's action space be $(A, B)$. Similarly, a descending chain where at $t = 0$ the two roads A and B both exist and at $t = 1$ one road, A, gets closed down and thus disappears from the network.

Assume the following decision rule: at each step, each agent chooses whether to take the road in the subgraph. Once an agent has chosen a road, it cannot take any more roads later.  At $t = 1$, the payoffs as shown in the above table are realized. 

In the ascending chain case, the only subgame perfect Nash equilibrium of this extensive form game is $(A, B)$ because agent 1 will make sure to take road A at $t = 0$ and agent 2 takes road B at $t = 1$. Neither will have incentives to deviate. However, when the opposite holds, i.e., when at $t = 0$ only road B is opened up and only at $t = 1$ road A is added, $(B, A)$ will instead be the only subgame perfect Nash equilibrium.

\section{Conclusion}

This paper has focused specifically on selecting (or refining) the pure strategy Nash equilibria set of the normal-form PAG based on the extensive-form game. Two assumptions are invoked in the formulation of this problem. First, a state vector is realized only at the end of a power allocation path as the power allocation outcome. Second,  the graph at each time step is a spanning subgraph of the environment graph in the normal-form PAG. 

The ideas in the paper can be adapted to the studies of other problems of interest. For this to happen, the two assumptions should be relaxed. In terms of a differential game problem where countries would optimally control their power allocation paths in changing environments, the problem formulation should have a state or payoff vector realized at the end of every period on the allocation path. Countries' power and relations may also change over time, either deterministically or stochastically. Moreover, the environment graph at each time instant may be any possible signed graph of those countries. In terms of equilibrium selection in network formation games, it should be noted that agents' strategies are to form or change the network itself in these games. Accordingly instead of assuming environment graphs that change over time, the kinds of agents that are able to choose their strategies should be assumed to change instead. For instance, at each time step, a subset of agents is only allowed to choose their strategies of whether to connect with others, where a simple leader-follower sequence as mentioned before will be a special case.



\bibliographystyle{unsrt}

\end{document}